\newlength\imagewidth
\newlength\imagewidtha
\newtheorem{Proposition}{Proposition}
\newtheorem{Corollary}{Corollary}
\DeclarePairedDelimiter\abs{\lvert}{\rvert}%
\let\oldabs\abs
\def\abs{\@ifstar{\oldabs}{\oldabs*}}
\begin{document}

\begin{frontmatter}

\title{Chosen-plaintext attack of an image encryption scheme based \\on modified permutation-diffusion structure}
\author[cn-xtu-cie,cityu]{Yuansheng Liu}

\author[cityu]{Leo Yu Zhang\corref{corr}}
\ead{leocityu@gmail.com}
\cortext[corr]{Corresponding author.}

\author[cityu]{Jia Wang}
\author[cn-xnu]{Yushu Zhang}
\author[cityu]{Kwok-wo Wong}

\address[cn-xtu-cie]{College of Information Engineering, Xiangtan University, Xiangtan 411105, Hunan, China}
\address[cityu]{Department of Electronic Engineering, City University of Hong Kong, HongKong SAR}
\address[cn-xnu]{School of Electronics and Information Engineering, Southwest University, Chongqing 400000, China}

\begin{abstract}
Since the first appearance in Fridrich's design, the usage of permutation-diffusion structure for designing digital image cryptosystem has been receiving increasing research attention
in the field of chaos-based cryptography. Recently, a novel chaotic Image Cipher using one round Modified Permutation-Diffusion pattern (ICMPD) was proposed.
Unlike traditional permutation-diffusion structure, the permutation is operated on bit level instead of pixel level and the diffusion is
operated on masked pixels, which are obtained by carrying out the classical affine cipher, instead of plain pixels in ICMPD.
Following a \textit{divide-and-conquer strategy}, this paper reports that ICMPD can be compromised by a chosen-plaintext attack efficiently and
the involved data complexity is linear to the size of the plain-image. Moreover, the relationship
between the cryptographic kernel at the diffusion stage of ICMPD and modulo addition then XORing  is explored thoroughly.

\end{abstract}

\begin{keyword}
Image encryption\sep Cryptanalysis \sep Chosen-plaintext attack \sep Permutation \sep Diffusion
\end{keyword}
\end{frontmatter}

\section{Introduction}
\label{sec:intro}
In the field of chaos-based cryptography, Fridrich's design \cite{fridrich1998symmetric}, we refer to it as permutation-diffusion structure in this paper, receives remarkable research attention \cite{chen2004symmetric,wong2008fast,zhu2011chaos,zhang2014chaotic,norouzi2014simple,yang2015novel}.
Inheriting from the substitution permutation network, this scheme suggests iterating the permutation and diffusion stage several rounds to earn good confusion and diffusion effect \cite{shannon1949communication},
as depicted in Fig.~\ref{fig:struture}.

Extending the work of Fridrich's can be carried out in various aspects. Chen \textit{et al.} proposed using $3$D chaotic cat map to de-correlate the relationship among pixels in the permutation stage instead of $2$D map \cite{chen2004symmetric}. Observing that one diffusion round, which typically proceeds in a sequential manner and involves nonlinear operations, often possesses higher computational complexity than that of a permutation round, Wong \textit{et al.} proposed to use a ``add-and-then-shift" strategy to include some diffusion effect in the permutation stage \cite{wong2008fast}.
In this way, the iteration round as well as the computational complexity can be reduced without affecting the security level of the resultant cryptosystem.
This idea was further studied by Zhu \textit{et al.} in \cite{zhu2011chaos} to design bit level permutation techniques.

For the sake of efficiency, there are some researchers devoted their attention to design secure chaos-based cryptosystem in the extreme case, i.e., the iteration round is only one. In \cite{zhang2014chaotic},
Zhang \textit{et al.} proposed a chaos-based image cipher based on one round permutation-diffusion structure, where some plaintext information is fed back to the key schedule.
In \cite{norouzi2014simple}, Norouzi \textit{et al.} suggested correlating the key schedule with the sum of plaintext data to construct chaotic cipher with a single diffusion round. The intuitive extension of their work is to include a permutation stage in the whole system, as suggested by Yang \textit{et al.} in \cite{yang2015novel}. In \cite{zhu2014image}, Zhu \textit{et al.} suggested a chaotic Image Cipher using one round Modified Permutation-Diffusion (ICMPD) architecture. Different from Fridrich's design,
the permutation stage is operated on bit level instead of pixel level and the diffusion stage is
operated on the output of classical affine cipher instead of plain pixel.

As stated in \cite{arroyo2013cryptanalysis}, most image cryptosystems based on one round permutation-diffusion
architecture are not secure under chosen plaintext attack (CPA) scenario. This paper reports that ICMPD suffers from the
same defect. Unlike many cryptanalysis work which only deal with specific chaos-based image
cryptosystem \cite{li2012breaking,wang2014cryptanalysis,zhang15MTAP}, this work makes several contributions.
First, we provide a quantitative security evaluation framework to both the diffusion kernel of ICMPD and the classical
modulo then XORing operation.
Second, we report that employment of the nonlinear modulo operation will inevitably
leads to the problem of the existence of (partial equivalent) key streams in the one round permutation-diffusion structure. 
Finally, the application of our result lead to an efficient
CPA attack to ICMPD. This is a reproducible research and all the codes are openly
accessible\footnote{https://sites.google.com/site/leoyuzhang/.}.

The rest of the paper is organized as follows. The next section describes the details of ICMPD and then provides
some experimental results for illustration.
In Sec.~3, the diffusion kernel of ICMPD is casted to the form of modulo then XORing and analyzed thoroughly.
Sec.~4 explains how to break ICMPD using a \textit{divide-and-conquer strategy} in CPA scenario, followed by some
simulation results. The last section concludes our work by briefly discussing the possible remedies of ICMPD.

\section{The image encryption scheme under study}
The image encryption scheme proposed in \cite{zhu2014image}, i.e., ICMPD, is applied to gray-scale image with $L = H \times W$
pixels. It exploits the permutation-diffusion structure suggested by Fridrich \cite{fridrich1998symmetric} with the following two
modifications: a) the permutation is operated on bits instead of pixels; b) the diffusion is operated on masked pixels instead of
plain pixels. For the sake of clarity, we depict the schematic diagram of ICMPD in Fig.~\ref{fig:newstruture}
and modify the notations used in \cite{zhu2014image} to describe the scheme under study.

\subsection{Key schedule}
\label{sec:key}
The secret key of ICMPD is composed of a set of initial values and control parameters for several chaotic systems. Specifically, they are:
\begin{itemize}
\item Initial value $(x_0, y_0)$ and control parameters $(a, b)$ of the following generalized Arnold map
    \begin{equation}
    \left(
    \begin{array}{c}
    x_{n+1}   \\
    y_{n+1}
    \end{array} \right)
    =
    \left(
    \begin{array}{cc}
    1  & a \\
    b  & 1+ab
    \end{array} \right)
    \left(
    \begin{array}{c}
    x_{n}   \\
    y_{n}
    \end{array} \right)  \bmod 1,
    \label{eq:genArn}
    \end{equation}
where $a>1$, $b>1$ and $(x \bmod 1)$ represents the fractional part of real number $x$.

\item Two sets of initial value and control parameter, i.e., $\{(k', x'_0), (k^\diamond, x^\diamond_0)\}$, of the following Chebyshev map
    \begin{equation}
    x_{n+1} = \cos (k \cdot \arccos (x_n)),
    \label{eq:chebyshev}
    \end{equation}
where $k \leq 2$ and $x_n \in [-1, 1]$.

\item Initial value and control parameter $(\mu, x^*_0)$ of the following Logistic map
    \begin{equation}
    x_{n+1} = \mu x_n (1-x_n) ,
    \label{eq:logistic}
    \end{equation}
where $\mu \in (3.57, 4)$ and $x_n \in (0, 1)$.
\end{itemize}
The secret key streams employed in the row/column permutation stage, substitution stage and diffusion stage are obtained through post-processing the chaotic systems orbits.
These processes can be summarized as follows:
\begin{enumerate}
\item \textit{Permutation streams $E_r$ and $E_c$.}
Iterate the generalized Arnold map (\ref{eq:genArn}) using the partial key $(x_0, y_0, a, b)$ $h_0 + 8L$ times and denote the latter $8L$ outputs by $X = \{x_i\}_{i=1}^{8L}$ and $Y = \{y_i\}_{i=1}^{8L}$. Sort $X$ and $Y$ in ascending order
and get the permutation streams $E_r = \{e_r({i})\}_{i=1}^{8L}$ and $E_c = \{e_c({i})\}_{i=1}^{8L}$ by comparing $X$ and $Y$ with their sorted versions, respectively.

\item \textit{Substitution streams $S$ and $T$.}
Run the Chebyshev map (\ref{eq:chebyshev}) iteratively through $(k', x'_0)$ and post-process the resultant orbit $x_i$ by
\begin{equation}
y_i = \lfloor  10^9 \cdot \abs{x_i} \rfloor \bmod 256,
\label{eq:postpro}
\end{equation}
where $\abs{x}$ and $\lfloor x \rfloor$ return the absolute value of $x$ and the largest value not larger than $x$, respectively. If $\gcd(y_i, 256) =1$, we push this value to $S$. Otherwise, we proceed with the next orbit $x_{i+1}$ till the length of $S$ reaches $L$. Finally, it comes to the conclusion that we obtain a random number stream $S = \{s(i)\}_{i=1}^L$, whose elements are coprime to $256$.
Similarly, run Eq.~(\ref{eq:chebyshev}) under $(k^\diamond, x^\diamond_0)$ and get $\{x^\diamond_i\}_{i=1}^{L}$. Quantize the result using Eq.~(\ref{eq:postpro}) and obtain
$T = \{t(i)\}_{i=1}^{L}$.

\item \textit{Diffusion stream $R$.}
Execute the Logistic map (\ref{eq:logistic}) under $(\mu, x^*_0)$ iteratively and obtain random chaotic orbits $\{x^*_i\}_{i=1}^{L}$. Quantize the sequences by Eq.~(\ref{eq:postpro}) and denote the results by $R=\{r(i)\}_{i=1}^{L}$.

\end{enumerate}

\subsection{Encryption process}
\label{sec:encpro}
As depicted in Fig.~\ref{fig:newstruture}, the encryption process in ICMPD
is composed of the following steps:
\begin{enumerate}
\item \textit{Bit decomposition.} Scan an image $P$ in the raster order and  obtain a pixel sequence
$\{p(i)\}_{i=1}^{L}$. Decompose each pixel of $P$ to its $8$ bits and denote the binary sequence by
$B=\{b(j)\}_{j=1}^{8L}$, where
$p(i)=\sum_{k=1}^8 b(8(i-1)+k)\cdot 2^{k-1}$.

\item \textit{Bit permutation\footnote{For simplicity, we slightly modify the permutation techniques described in \cite{zhu2014image} while keeping its security level unchanged.}.}
    Permute the binary format of the image $B$ in both horizontal and vertical directions and get
    $\bar{B}= \{\bar{b}(j)\}_{j=1}^{8L}$ via
    \begin{equation}
    \bar{b}(j) = b(e_c(e_r(j))).
    \label{eq:bitpermuation}
    \end{equation}
\item \textit{Local pixel substitution.} Combine every $8$-bit of $\bar{B}$ to a new pixel sequentially using
    \begin{equation}
    p'(i) = \sum_{k=1}^8 \bar{b}(8(i-1)+k) \cdot 2^{k-1}
    \label{eq:bitcomposition}
    \end{equation}
    where $i = 1 \sim L$. The obtained pixels are substituted using the affine cipher orderly, i.e.,
    \begin{equation}
    c'(i) = p'(i)s(i) \dotplus t(i),
    \label{eq:affine}
    \end{equation}
    where $a\dotplus b = (a+b) \bmod 2^8$.

\item \textit{Global pixel diffusion.} Collect the substitution result $C' = \{c'(i)\}_{i=1}^{L}$
and update it by the classical diffusion rule as follows
    \begin{equation}
    c(i) = c'(i) \oplus r(i) \oplus c(i-1),
    \label{eq:diffusion}
    \end{equation}
    where $c(0)=172$ and $i = 1\sim L$. Finally, transform the ciphertext sequence
    $C= \{c(i)\}_{i=1}^{L}$ into an image of size $H\times W$.
\end{enumerate}

The decryption can be achieved by executing the encryption steps reversely, detailed description can be found in \cite[Sec.~3]{zhu2014image}.
As demonstrated by Zhu \textit{et al.} in \cite[Sec.~4]{zhu2014image}, the new scheme should possess high security since:
1) the key space is large enough to resist brute-force attack; 2) the adoption of multiple chaotic systems
for the generation of key streams guarantees good key sensitivity; 3) the modified permutation-diffusion architecture introduces diffusion effect in both permutation and diffusion stage, which may frustrate any plaintext attacks.
For illustration purpose, we set the secret key $(x_0, y_0, a, b, k', x'_0, k^\diamond, x^\diamond_0, \mu, x^*_0)$ to
\begin{small}
$(0.346, 0.478, 1.644, 2.986, 4.434, 0.6435, 5.673, 0.523, 3.14, 0.34)$.
\end{small}
Two $128\times 128$ plain-images, ``Lena" and ``Peppers", shown in Fig.~\ref{fig1:plain1} and Fig.~\ref{fig1:plain2} are encrypted and their corresponding cipher-images
are depicted in Fig.~\ref{fig1:cipher1} and Fig.~\ref{fig1:cipher2}.

As we will discuss in the next section, the local pixel substitution and global pixel diffusion, which serves as the core of the nonlinear diffusion stage of the modified architecture, can be treated as the generalization of a typical modulo addition then XORing operation and is fragile in chosen-plaintext attack (CPA) scenario.
Based on this finding, a CPA is readily to compromise the cipher under study using the \textit{divide-and-conquer} strategy.

\section{Related work and main results}
\label{sec:mainresult}
The modulo addition then XORing operation, which is nonlinear and has low computational complexity,
serves as the fundamental or even the only component in many image cryptosystems \cite{chen2004symmetric,CH:HCKBA:IJBC10,Rao:ModifiedCKBA:ICDSP07,chen2014fast,zhang2015color,zhu2011chaos}.
Mathematically, it can be expressed as
\begin{equation}
c(i) = (p(i)\dotplus k(i)) \oplus k(i) \oplus c(i-1) ,
\label{eq:diffusionA}
\end{equation}
where $k(i)$ is the $i$-th element of the key stream $K$,
$p(i)$ and $c(i)$ are the $i$-th pixel of plain-image $P$ and cipher-image $C$, respectively.
Under the CPA assumption, where an adversary is able to obtain ciphertexts of arbitrary plaintexts adaptively,
the relationship of the difference between two groups of chosen plain-image and cipher-image pairs, i.e., $(P, C)$
and $(\tilde{P}, \tilde{C})$, ca be derived as follows:
\begin{equation*}
(c(i)\oplus c(i-1)) \oplus (\tilde{c}(i)\oplus \tilde{c}(i-1))
= (p(i)\dotplus k(i)) \oplus (\tilde{p}(i) \dotplus k(i)),
\end{equation*}
where $i = 1\sim L$.
More generally, we write it as
\begin{equation}
y = (\alpha \dotplus k) \oplus (\beta \dotplus k).
\label{eq:ModAdd}
\end{equation}
From the cryptanalysis point of view, these questions arise naturally:
\begin{enumerate}
\item Given a large quantities of $(\alpha, \beta, y)$, it is obvious that the exact key $k$ used for encryption will satisfy all the resultant equations of the form (\ref{eq:ModAdd}). But
 is this $k$ unique or not? This relates to the question of the existence of equivalent key.
\item How many queries of $(\alpha, \beta)$ are sufficient to recover the exact secret key $k$ or its equivalent form\footnote{The adversary can choose $(\alpha, \beta)$ freely and  be aware of the value of $y$ in CPA assumption.}?
    This relates to the resistance of the cryptosystem in CPA scenario.
\end{enumerate}
In \cite{li2011breaking}, Li \textit{et al.} proved that $3$ pairs of queries $(\alpha, \beta)$ are sufficient to
solve Eq.~(\ref{eq:ModAdd}) in terms of modulo $2^7$. Soon, they improved this result in terms of required number of queries to $2$ in \cite{li2013breaking}.

Before we dive into the detail of the proof, we would like to cast the diffusion process of ICMPD as the form of Eq.~(\ref{eq:ModAdd}).
Combining Eq.~(\ref{eq:affine}) and (\ref{eq:diffusion}), we can get
\begin{equation}
c(i) =  [  p'(i)s(i) \dotplus t(i) ] \oplus r(i) \oplus c(i-1).
\label{eq:newdiffusion}
\end{equation}
Similarly, we calculate the difference of two groups of chosen plain-image and cipher-image as follows:
\begin{equation*}
(c(i)\oplus c(i-1)) \oplus (\tilde{c}(i)\oplus \tilde{c}(i-1))
= (p'(i)s(i)\dotplus t(i)) \oplus (\tilde{p}'(i)s(i)\dotplus t(i)) ,
\end{equation*}
Assuming $E_r$ and $E_c$ are known in advance by the adversary (or simply treat them as identity permutations), we can generalize
the above equation as
\begin{equation}
y = (\alpha s \dotplus t) \oplus (\beta s \dotplus t),
\label{eq:newmodAdd}
\end{equation}
where $s, t$ are two unknowns, $y$ is known and $\alpha, \beta$ are known and can be chosen freely by the
adversary in CPA scenario. Now, the same questions arise for Eq.~(\ref{eq:newmodAdd}). We will answer them in the following sections.


\subsection{Previous work}
The following two propositions solve the two questions related to Eq.~(\ref{eq:ModAdd}).
\begin{Proposition}
Let $\hat{k} = k\oplus 2^7$, then $\hat{k}$ is a solution of Eq.~(\ref{eq:ModAdd}) if
$k$ satisfies $y = (\alpha \dotplus k) \oplus (\beta \dotplus k)$.
\label{pro:eqkey1}
\end{Proposition}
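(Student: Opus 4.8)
The plan is to exploit a single algebraic fact about $8$-bit words: adding $2^7$ modulo $2^8$ coincides with flipping the most significant bit, i.e. $a \dotplus 2^7 = a \oplus 2^7$ for every $a \in \{0,1,\dots,2^8-1\}$. First I would establish this identity. Writing $a = a_h\cdot 2^7 + a_\ell$ with $a_h\in\{0,1\}$ and $0\le a_\ell < 2^7$, the sum $a + 2^7$ changes the high bit $a_h$ to $1-a_h$ while leaving the low part $a_\ell$ untouched; when $a_h=1$ the carry propagates out of bit position $7$ and is discarded by the reduction modulo $2^8$. Hence $a\dotplus 2^7 = a\oplus 2^7$, and in particular $\hat{k} = k\oplus 2^7 = k\dotplus 2^7$.

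Next I would substitute $\hat{k}$ into the left-hand side of Eq.~(\ref{eq:ModAdd}). Using associativity and commutativity of $\dotplus$ on $\mathbb{Z}_{2^8}$ together with the identity above applied twice,
\begin{align*}
\alpha \dotplus \hat{k} &= (\alpha \dotplus k)\dotplus 2^7 = (\alpha \dotplus k)\oplus 2^7, \\
\beta  \dotplus \hat{k} &= (\beta  \dotplus k)\dotplus 2^7 = (\beta  \dotplus k)\oplus 2^7.
\end{align*}
XOR-ing the two expressions, the two copies of $2^7$ cancel since $2^7\oplus 2^7 = 0$, so
\begin{equation*}
(\alpha \dotplus \hat{k})\oplus(\beta \dotplus \hat{k}) = (\alpha \dotplus k)\oplus(\beta \dotplus k)\oplus 2^7\oplus 2^7 = (\alpha \dotplus k)\oplus(\beta \dotplus k) = y,
\end{equation*}
which is exactly the assertion that $\hat{k}$ solves Eq.~(\ref{eq:ModAdd}) whenever $k$ does.

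There is no genuine obstacle here; the only point that needs care is the first identity, namely that the carry generated at bit position $7$ is annihilated by the modular reduction, so that the nonlinear operation $\dotplus$ and the linear operation $\oplus$ agree when the second operand is exactly $2^7$. Everything else is routine bookkeeping with the group structure of $(\mathbb{Z}_{2^8},\dotplus)$ and the fact that $2^7$ is its own inverse under $\oplus$. I would also note in passing that the same computation shows $k\mapsto k\oplus 2^7$ is an involution on the solution set of Eq.~(\ref{eq:ModAdd}), which is the seed of the equivalent-key phenomenon analyzed in the subsequent results.
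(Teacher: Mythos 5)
Your proof is correct and follows essentially the same route as the paper's: both hinge on the identity $a \dotplus 2^7 = a \oplus 2^7$ (carry out of the top bit discarded by the modular reduction) applied to $\alpha \dotplus k$ and $\beta \dotplus k$, with the two copies of $2^7$ cancelling under XOR. The only difference is presentational — you push $\hat{k}$ forward through the expression while the paper inserts $2^7 \oplus 2^7$ into $y$ and folds it back into $k$ — which is the same calculation read in the opposite direction.
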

\begin{proof}
To prove this proposition, we first examine the relationship of $k \oplus 2^7$ and $k \dotplus 2^7$.
If $k \geq 2^7$, then it is easy to conclude $k \oplus 2^7 = k - 2^7 = k \dotplus 2^7$. Similarly, we have
$k \oplus 2^7 = k + 2^7 = k \dotplus 2^7$ when $k \leq 2^7$. Therefore,
\begin{IEEEeqnarray}{rCl}
y &=& (\alpha \dotplus k ) \oplus (\beta \dotplus k) \nonumber \\
  &=& (\alpha \dotplus k ) \oplus 2^7 \oplus (\beta \dotplus k) \oplus 2^7  \nonumber \\
  &=& (\alpha \dotplus k  \dotplus 2^7) \oplus (\beta \dotplus k \dotplus 2^7)\nonumber  \\
  &=& [\alpha \dotplus (k \oplus 2^7)] \oplus [\beta \dotplus (k \oplus 2^7)]\nonumber  \\
  &=& (\alpha \dotplus \hat{k} ) \oplus (\beta \dotplus \hat{k}).\nonumber
\end{IEEEeqnarray}
Hence completes the proof.
\end{proof}
Applying this proposition directly, we can easily conclude that all the image cryptosystems employing diffusion Eq.~(\ref{eq:diffusionA}) are subjected to the problem of existence of equivalent key (stream).
To be more precisely, this problem stems from the nature of the modulo operator, i.e., the carry bit generated by the highest bit plane is discarded after the modulo operation.
In the following proposition, we answer the question of how many pairs of chosen plain-images and cipher-images, hence $(\alpha, \beta)$ can be
chosen freely and $y$ is known, are sufficient to recover the key stream of Eq.~(\ref{eq:diffusionA}) in terms of modulo $2^7$.

\begin{Proposition}
Two groups of $(\alpha, \beta)$ are sufficient to solve Eq.~(\ref{eq:ModAdd}) in terms
of modulo $2^7$. Specifically, they are $(0, 170)$ and $(170, 85)$.
\label{pro:solve1}
\end{Proposition}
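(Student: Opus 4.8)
The plan is to read off the low bits of $k$ from the carry structure of binary addition. First note that by Proposition~\ref{pro:eqkey1} the values $k$ and $k\oplus 2^7$ yield the same right–hand side of Eq.~(\ref{eq:ModAdd}) for \emph{every} choice of $(\alpha,\beta)$, so the most one can determine is the residue $k\bmod 2^7$, i.e.\ the seven low bits $k_0,\dots,k_6$ in $k=\sum_{j\ge 0}k_j2^j$; this is exactly the sense in which the proposition ``solves modulo $2^7$''. I would then normalise the two queries to a common shape. The pair $(0,170)$ gives directly $y^{(1)}=(k\dotplus 170)\oplus k$. For the pair $(170,85)$ one has $y^{(2)}=(k\dotplus 170)\oplus(k\dotplus 85)$, and since $170=85\dotplus 85$ we may write $k\dotplus 170=(k\dotplus 85)\dotplus 85$; hence, with $v:=k\dotplus 85$, $y^{(2)}=(v\dotplus 85)\oplus v$, which is the same shape with the mask $85$ in place of $170$. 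Recovering $v\bmod 2^7$ is equivalent to recovering $k\bmod 2^7$ because $k\equiv v-85\pmod{2^7}$.

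Next I would isolate the elementary carry lemma. For a fixed mask $m$ let $c_j=c_j(v,m)\in\{0,1\}$ be the carry propagating into bit position $j$ when the integers $v$ and $m$ are added, so $c_0=0$, $c_{j+1}=\mathrm{maj}(v_j,m_j,c_j)$, and the $j$-th bit of $v\dotplus m$ equals $v_j\oplus m_j\oplus c_j$. Consequently the $j$-th bit of $(v\dotplus m)\oplus v$ is just $m_j\oplus c_j$, so from the known word $y$ the whole carry string $c_0c_1\cdots c_7$ is read off at once. Feeding this back into the majority recursion: once $v_0,\dots,v_{j-1}$ (hence $c_j$) are known, the bit $v_j$ is \emph{forced} to equal $c_{j+1}$ whenever $m_j\ne c_j$, whereas if $m_j=c_j$ the carry-out equals $m_j$ irrespective of $v_j$ and the query reveals nothing about $v_j$. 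Thus query~1 pins down $k_j$ iff $c_j(k,170)\ne 170_j$, and query~2 pins down $v_j$ — equivalently $k_j$, since $v_j=k_j\oplus 85_j\oplus c_j(k,85)$ with $c_j(k,85)$ already known by induction — iff $c_j(v,85)\ne 85_j$.

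The core of the argument is then an induction on $j=0,1,\dots,6$ establishing that at least one of the two queries reveals $k_j$. Since the masks $170=(10101010)_2$ and $85=(01010101)_2$ are bitwise complements, the only ``blind'' situation is $c_j(k,170)=170_j$ \emph{and} $c_j(v,85)=85_j=\overline{170_j}$. I would rule this out using the closed form $c_j(x,m)=1\iff (x\bmod 2^j)\ge 2^j-(m\bmod 2^j)$ together with the identity $(170\bmod 2^j)+(85\bmod 2^j)=2^j-1$, valid for $1\le j\le 8$: writing $a=k\bmod 2^j$ and $v\bmod 2^j=(a+85)\bmod 2^j$, the blind case becomes a linear constraint on $a$ whose feasible set turns out to be empty for every $j\le 6$, collapsing precisely because $170\bmod 2^j$ always sits just below $2^j/3$ (for odd $j$) or just below $2^{j+1}/3$ (for even $j$). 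The base steps are explicit and match the stated queries: $y^{(2)}$ forces $v_0=1\oplus k_0$ at bit $1$, hence $k_0$, while $y^{(1)}$ forces $k_1$ at bit $2$; from there the induction runs.

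I expect the last step — certifying that the two blind spots never overlap — to be the real obstacle. The dependence of $c_j$ on the low bits of $k$ is a threshold/majority condition, so this is genuinely a nonlinear claim rather than a one-line linear-algebra fact; one must either carry out the short case analysis over the seven bit positions or package it through the $2^j/3$ estimate above, which is tight at each $j$. Everything else — the reduction to canonical form, the carry lemma, and the bit-by-bit back-substitution — is routine once the notation is set up.
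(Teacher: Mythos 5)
Your argument is correct in outline, but it is a genuinely different route from the paper's. The paper deliberately sidesteps the carry-chain analysis (which it attributes to \cite{li2011breaking,li2013breaking}) and instead verifies the claim by exhaustive enumeration: with the queries $(0,170)$ and $(170,85)$ fixed, it sweeps all $k\in[0,127]$ and all $256\times 256$ pairs $(y_1,y_2)$, checks that exactly $128$ pairs occur and that each is compatible with a single $k$ modulo $2^7$, and records the $128$ triples $(y_1,y_2,k)$ as a look-up table; this brute-force template is what gets reused for the multiplicative variant in Proposition~\ref{pro:solve2} and Corollary~1, where a theoretical carry study would be far harder. You instead take the analytical route: normalising the second query via $v=k\dotplus 85$, reading the carry string off each response since bit $j$ of $(v\dotplus m)\oplus v$ equals $m_j\oplus c_j$, observing that a query pins down bit $j$ exactly when the incoming carry differs from the mask bit, and using the complementarity of $170$ and $85$ to show the two blind spots never coincide for $j\le 6$. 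I checked your deferred step and it does hold: for instance at $j=2$ query 1 is blind iff $k_1=0$ while query 2 is blind iff $k\bmod 4=2$ (forcing $k_1=1$), and the analogous disjointness at $j=3,\dots,6$ follows from the threshold form $c_j(x,m)=1\iff(x\bmod 2^j)\ge 2^j-(m\bmod 2^j)$ together with $v\equiv k+85$; the base cases $j=0,1$ are as you state, and your use of the induction hypothesis to convert the revealed $v_j$ into $k_j$ via $c_j(k,85)$ is sound. So your proof is complete once the seven-position case check is written out. What your approach buys is an explanation of \emph{why} this particular complementary query pair works, plus an explicit bit-by-bit recovery procedure; what the paper's approach buys is speed, a ready-made table that turns key recovery in the attack into a single look-up, and direct transferability to Eq.~(\ref{eq:newmodAdd}), where the presence of the multiplication by $s$ makes your style of carry analysis much more delicate.
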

\begin{proof}
The proof presented in \cite{li2011breaking,li2013breaking} involves theoretically studying the carry bit of all bit planes of Eq.~(\ref{eq:ModAdd}), details can be found in \cite[Sec.~3.3]{li2013breaking}.
Here, we would rather follow a straightforward logic to verify this proposition, which is shown to be useful for our new model Eq.~(\ref{eq:newmodAdd}).

Let $(\alpha_1, \beta_1) = (0, 170)$ and $(\alpha_2, \beta_2) = (170, 85)$, the proposition can be reformulated as
\begin{equation}
\left\{
\begin{IEEEeqnarraybox}[][c]{rCl}
\IEEEstrut
y_1 &=& (\alpha_1 \dotplus k ) \oplus (\beta_1 \dotplus k) , \\
y_2 &=& (\alpha_2 \dotplus k ) \oplus (\beta_2 \dotplus k) ,
\IEEEstrut
\end{IEEEeqnarraybox}
\right.
\label{eq:solveModAdd}
\end{equation}
where $y_1, y_2 \in [0, 255]$ are two known integers. This problem converts to whether
the solution to Eq.~(\ref{eq:solveModAdd}) is unique in terms of modolu $2^7$ given $y_1$ and  $y_2$.
More precisely, there is a unique solution for certain known $(y_1, y_2)$ tuple and
there are totally $2^7$ out of all the possible ($256 \times 256$) tuples of $(y_1, y_2)$ which leads to this unique solution.
The following procedures demonstrate how this statement is verified.
\begin{description}[noitemsep, nolistsep]
\item[Step 1:] Let $y_1 =0$, and find all the $k \in [0, 127]$ that satisfy the equation $y_1 = (\alpha_1 \dotplus k) \oplus (\beta_1 \oplus k)$ and denote them as $\mathbb{K}_{y_1}$.
\item[Step 2:] Let $y_2 =0$, and find all the $k \in \mathbb{K}_{y_1}$ that satisfy the equation
    $y_2 = (\alpha_2 \dotplus k) \oplus (\beta_2 \oplus k)$ and denote the possible results as $\mathbb{K}_{y_2}$.
\item[Step 3:] If $\#\{\mathbb{K}_{y_2}\}$ equals $1$ and $y_2 < 256$, then set $y_2 = y_2+1$ and go to Step 2. 
\item[Step 4:] Let $y_1= y_1 +1$ if $y_1<256$ and set $y_2=0$, go to Step~1.
\end{description}
Finally, we can easily obtain $128$ out of $256\times 256$ tuples of $(y_1, y_2)$ and their corresponding
$k$ from the above procedures and then construct a table composed of these $128$ triples $(y_1, y_2, k)$.
The solution of Eq.~(\ref{eq:ModAdd}) under queries $(\alpha_1, \beta_1)$ and
$(\alpha_2, \beta_2)$ can be determined by simple look-up-table, hence finishes the proof of the
proposition.
\end{proof}
Proposition~\ref{pro:solve1} deals with the problem of finding the solution of Eq.~(\ref{eq:ModAdd}), and thus determining the diffusion key stream
of Eq.~(\ref{eq:diffusionA}) in the context of a CPA scenario.
Instead of studying all the carry bits of Eq.~(\ref{eq:ModAdd}) theoretically, the proof shown above heavily relies on
exhaustively search over all the $256 \times 256$ combinations. This makes the proof seem informal but it possesses the following
advantages: a) It is extremely fast since the number of the combinations is only $256 \times 256$; b) The by-product, i.e., the
table composed of $128$ triples $(y_1, y_2, k)$, allows one find the key stream for Eq.~(\ref{eq:diffusionA}) by a trivial look-up-table operation;
c) It can be easily extended to other diffusion operations when theoretically studying all the carry bits is
difficult, if not impossible.

\subsection{Main results}
Based on the strategy presented above, we answer the questions about the solution of
Eq.~(\ref{eq:newmodAdd}) in the following.

\begin{Proposition}
Suppose $y, s, \alpha, \beta \in [0, 255]$, $t\in [0, 128)$ and $\gcd(s, 256) =1$.
Given $\alpha$, $\beta$ and $y$, the equation
$y = (\alpha s \dotplus t) \oplus (\beta s \dotplus t)$ has four equivalent solutions.
Specifically, they are $(s,t)$, $(s, t+128)$, $(256 -s, 127-t)$ and $(256-s, 255-t)$.
\label{pro:eqkey2}
\end{Proposition}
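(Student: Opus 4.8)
The plan is to prove the proposition by direct verification, in exactly the spirit of the proof of Proposition~\ref{pro:eqkey1}: I would show that whenever $(s,t)$ satisfies $y=(\alpha s\dotplus t)\oplus(\beta s\dotplus t)$, each of the three other listed pairs satisfies the same equation, so the solution set decomposes into orbits of size four. Everything rests on two elementary byte identities. The first is $a\dotplus 2^7=a\oplus 2^7$ for every $a\in[0,2^8)$, which is precisely the fact extracted inside the proof of Proposition~\ref{pro:eqkey1}. The second is $2^8-1-a=(2^8-1)\oplus a$ for every $a\in[0,2^8)$, i.e.\ subtracting a byte from the all-ones byte is bitwise complementation. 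Beyond these I only use that $\dotplus$ is addition modulo $2^8$ (hence associative, and compatible with reduction) and that $256-s\equiv-s\pmod{2^8}$.

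First I would dispose of $(s,t)\mapsto(s,t+128)$. By associativity, $\alpha s\dotplus(t+128)=(\alpha s\dotplus t)\dotplus 2^7$, which by the first identity equals $(\alpha s\dotplus t)\oplus 2^7$, and likewise with $\beta$ in place of $\alpha$; XORing the two expressions cancels the two copies of $2^7$ and recovers $y$. Next I would treat the reflection $(s,t)\mapsto(256-s,\,255-t)$: since $256-s\equiv-s$, we get $\alpha(256-s)\dotplus(255-t)\equiv 255-(\alpha s+t)\pmod{2^8}$, and reducing $\alpha s+t$ first shows this equals $255-(\alpha s\dotplus t)$, which the second identity turns into $255\oplus(\alpha s\dotplus t)$; the same holds for $\beta$, and XORing cancels the two copies of $255$. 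The remaining pair then follows by composing the two computations: writing $127-t=(255-t)-2^7$ gives $\alpha(256-s)\dotplus(127-t)=\bigl(255\oplus(\alpha s\dotplus t)\bigr)\oplus 2^7=127\oplus(\alpha s\dotplus t)$, and once more the two copies of $127$ cancel under XOR.

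Finally I would note the role of the hypotheses, which never enter the verification but guarantee that the four pairs form a self-contained family of admissible affine keys: $\gcd(s,2^8)=1$ forces $s$ odd, hence $s\neq 256-s$ and $\gcd(256-s,2^8)=1$, so the reflected pairs are again valid keys (with $s$-component coprime to $256$) and distinct from the first two; and restricting $t$ to $[0,128)$ pins down the canonical representative, the others having $t$-components $t+128$, $127-t$, $255-t$ all in $[0,256)$. Thus the statement records the equivalent-key phenomenon for Eq.~(\ref{eq:newmodAdd}) — the counterpart of Proposition~\ref{pro:eqkey1} — while the companion question of how many well-chosen queries $(\alpha,\beta)$ pin the solution down to exactly this orbit is the analogue of Proposition~\ref{pro:solve1} and would be handled by the same exhaustive-search argument.

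I do not expect a real obstacle here; the only thing to watch is the bookkeeping of the modular reductions in the reflection step, namely checking that the operands in ``$255-(\alpha s+t)\bmod 2^8=255-(\alpha s\dotplus t)$'' and in the analogous step for $127-t$ lie in the ranges where the complement identity applies. Since every manipulation is carried out modulo $2^8$ throughout, this goes through uniformly and no case split on the size of $\alpha s\dotplus t$ is required.
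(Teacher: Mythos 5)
Your proof is correct and takes essentially the same route as the paper: a direct verification that each of the three other pairs satisfies the equation, based on the identity $a \dotplus 2^7 = a \oplus 2^7$ and the bitwise-complement trick, with the hypotheses $\gcd(s,256)=1$ and $t\in[0,128)$ playing no role in the verification itself. The only notable difference is the order of the last two checks: the paper proves $f(s,t)=f(256-s,127-t)$ first, via a case split on whether $\alpha s \dotplus t < 128$, and then deduces the $255-t$ case, whereas you prove the $255-t$ case first using the uniform identity $255-a = 255\oplus a$ and obtain the $127-t$ case by shifting by $2^7$, which cleanly dispenses with the paper's case analysis.
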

\begin{proof}
Let $f(s, t) =(\alpha s \dotplus t) \oplus (\beta s \dotplus t)$, the proposition is proved if the following three equations are true:
\begin{enumerate}[label={(\roman*)},noitemsep, nolistsep]
\item $f(s, t) = f(s, t+128)$;
\item $f(s, t) = f(256-s, 127-t)$;
\item $f(s, t) = f(256-s, 255-t)$.
\end{enumerate}
Referring to Proposition~1, we have
\begin{IEEEeqnarray}{rCl}
f(s, t+128)&=& [(\alpha s \bmod 256) \dotplus (t+128 \bmod 256)]  \nonumber \\
           & & \oplus \: [(\beta s \bmod 256) \dotplus (t+128 \bmod 256)] \nonumber \\
           &=& [(\alpha s \bmod 256) \dotplus (t\oplus 128)]  \nonumber \\
           & & \oplus \: [(\beta s \bmod 256) \dotplus (t\oplus 128)] \nonumber \\
           &=& (\alpha s \dotplus t \dotplus 128) \oplus (\beta s \dotplus t \dotplus 128)  \nonumber \\
           &=& (\alpha s \dotplus t) \oplus  128 \oplus (\beta s \dotplus t) \oplus  128  \nonumber \\
           &=& f(s, t). \nonumber
\end{IEEEeqnarray}
To prove equation~(ii), we first consider the following two cases:
\begin{enumerate}[label={(\alph*)},noitemsep, nolistsep]
\item If $\alpha s \dotplus t < 128$, then we have
\begin{IEEEeqnarray}{rCl}
[127 -(\alpha s \dotplus t)] \bmod 256&=& 127 -(\alpha s \dotplus t)  \nonumber \\
                                      &=& (1111111)_2 - (\alpha s \dotplus t) \nonumber \\
                                      &=& (1111111)_2 \oplus (\alpha s \dotplus t) \nonumber \\
                                      &=& 127 \oplus (\alpha s \dotplus t), \nonumber
\end{IEEEeqnarray}
where $(\cdot)_2$ denotes the binary format of the operand.

\item If $\alpha s \dotplus t \geq 128$, then we have
\begin{IEEEeqnarray}{rCl}
[127 -(\alpha s \dotplus t)] \bmod 256&=& 127 + 256 -(\alpha s \dotplus t)  \nonumber \\
                                      &=& (101111111)_2 - (\alpha s \dotplus t) \nonumber \\
                                      &=& (1111111)_2 \oplus (\alpha s \dotplus t) \nonumber \\
                                      &=& 127 \oplus (\alpha s \dotplus t).  \nonumber
\end{IEEEeqnarray}
\end{enumerate}
Now, it is clear that
\begin{IEEEeqnarray}{rCl}
f(256-s, 127-t) &=& [(\alpha (256-s) \bmod 256) \dotplus (127-t)]  \nonumber \\
                & & \oplus \: [(\beta (256-s) \bmod 256) \dotplus (127-t)] \nonumber \\
                &=& [127 - (\alpha s \dotplus t)]  \bmod 256 \nonumber \\
                & & \oplus \: [127 - (\beta s \dotplus t)]  \bmod 256 \nonumber \\
                &=& 127 \oplus (\alpha s \dotplus t) \oplus 127 \oplus (\beta s \dotplus t) \nonumber \\
                &=& f(s, t).
\end{IEEEeqnarray}
Referring the result of equation~(i) and (ii), we conclude
\begin{IEEEeqnarray}{rCl}
f(256-s, 255-t) &=& f(256-s, 127-t+128) \nonumber \\
                &=& f(256-s, 127-t) \nonumber \\
                &=& f(s, t) \nonumber.
\end{IEEEeqnarray}
Finally, the proposition is proved.
\end{proof}

Apply this proposition directly, it is easy to conclude that the image cryptosystem under study, i.e., ICMPD, also suffers
from the problem of existence of equivalent key (stream).
Once again, we emphasize that this security defect is rooted from the use of modulo operation, where information of the highest carry bit is lost.

\begin{Proposition}
Suppose $y, s, t, \alpha, \beta \in [0, 255]$ and $\gcd(s, 256) =1$.
Seven groups of $(\alpha, \beta)$ are sufficient to solve the equation
\begin{IEEEeqnarray}{rCl}
y = (\alpha s \dotplus t) \oplus (\beta s \dotplus t)  \nonumber
\end{IEEEeqnarray}
in terms of modulo $2^7$. Specifically,
they are $(2^0, 2^1)$, $(2^1, 2^2)$, $(2^2, 2^3)$, $(2^3, 2^4)$,
$(2^4, 2^5)$, $(2^5, 2^6)$ and $(2^6, 2^7)$.
\label{pro:solve2}
\end{Proposition}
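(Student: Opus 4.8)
The plan is to follow the same exhaustive-verification template that established Proposition~\ref{pro:solve1}, with the four-fold symmetry of Proposition~\ref{pro:eqkey2} now playing the role that the pairing $k\leftrightarrow k\oplus 2^{7}$ of Proposition~\ref{pro:eqkey1} played there. Writing the seven queries as $(\alpha_j,\beta_j)=(2^{j-1},2^{j})$, $j=1,\dots,7$, set
\begin{equation*}
f_j(s,t)=(2^{j-1}s\dotplus t)\oplus(2^{j}s\dotplus t),\qquad
g(s,t)=\bigl(f_1(s,t),\dots,f_7(s,t)\bigr)\in[0,255]^{7}.
\end{equation*}
By Proposition~\ref{pro:eqkey2}, $g$ is constant on every quadruple $\{(s,t),(s,t+128),(256-s,127-t),(256-s,255-t)\}$, so no procedure can recover more than this equivalence class from the observed outputs; ``solving in terms of modulo $2^{7}$'' is to be understood exactly in that sense, i.e.\ as pinning down the unique representative of the class whose two components both lie in $[0,2^{7})$.

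With this setup the statement reduces to a finite check. First I would enumerate every pair $(s,t)$ with $\gcd(s,256)=1$ and $0\le t\le 255$ --- only $128\times 256$ of them, a search of the same trivial size as the one behind Proposition~\ref{pro:solve1} --- and tabulate $g(s,t)$. Then I would verify that the fibres of $g$ are \emph{exactly} the quadruples of Proposition~\ref{pro:eqkey2}, equivalently that $g$ is injective on the transversal $\{(s,t):\gcd(s,256)=1,\ 0\le s<128,\ 0\le t<128\}$, a set of $64\times 128$ points. Granting this, one stores the resulting $64\times 128$ triples $(y_1,\dots,y_7)\mapsto(s,t)$; the equation $y=(\alpha s\dotplus t)\oplus(\beta s\dotplus t)$ is then solved for the canonical $(s,t)$ by one look-up and, through Proposition~\ref{pro:eqkey2}, for its other three equivalent solutions --- which is precisely the reduction of the ICMPD diffusion key stream, Eq.~(\ref{eq:newdiffusion}), to a look-up-table operation.

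The crux, and the step I expect to resist a closed-form argument, is the injectivity claim: there is no a~priori reason that these particular powers-of-two queries separate all inequivalent $(s,t)$, and, unlike the plain modulo-addition setting of \cite{li2013breaking}, the unknown multiplier $s$ mixes the bit positions of $\alpha s$ and of $t$, so a clean bit-by-bit analysis of the carry chain is awkward here. I therefore anticipate the proof asserting injectivity on the strength of the exhaustive search, just as the proof of Proposition~\ref{pro:solve1} does; as part of the same computation it is worth confirming that six queries no longer suffice, which is what justifies the count seven. One last, routine item is the reduction to Eq.~(\ref{eq:newmodAdd}) in the first place: with $E_r$ and $E_c$ assumed known, the chaining term in the diffusion recurrence~(\ref{eq:diffusion}) cancels in the cross-difference of two chosen plain/cipher pairs, an adversary then controls $\alpha=p'(i)$ and $\beta=\tilde p'(i)$ and reads off $y=(c(i)\oplus c(i-1))\oplus(\tilde c(i)\oplus\tilde c(i-1))$ at \emph{every} index $i$ simultaneously, so all seven prescribed pairs are mounted with eight chosen images in total and the data complexity stays linear in $L$.
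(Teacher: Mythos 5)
Your proposal is correct and follows essentially the same route as the paper: rather than analysing the carry chain, the paper also reduces the claim to a finite computation, enumerating exactly the $64\times 128$ transversal $\{(s,t): s \text{ odd},\ 0\le s<128,\ 0\le t<128\}$ singled out by the equivalence classes of Proposition~\ref{pro:eqkey2}, verifying by exhaustive search that the seven query outputs $(y_1,\dots,y_7)$ distinguish all these pairs, and storing the resulting $9$-tuples so that solving Eq.~(\ref{eq:newmodAdd}) becomes a look-up-table operation. The only differences are cosmetic additions on your side (checking that six queries do not suffice, and the attack-level accounting of chosen images), which the paper's proof does not include.
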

\begin{proof}
Theoretically studying all the carry bits becomes extremely difficult in this context as Eq.~(\ref{eq:newmodAdd}) involves a multiplication. Let
 $(\alpha_1, \beta_1)=(2^0, 2^1)$,
$(\alpha_2, \beta_2)=(2^1, 2^2)$, $(\alpha_3, \beta_3)=(2^2, 2^3)$, $(\alpha_4, \beta_4)=(2^3, 2^4)$,
$(\alpha_5, \beta_5)=(2^4, 2^5)$, $(\alpha_6, \beta_6)=(2^5, 2^6)$ and $(\alpha_7, \beta_7)=(2^6, 2^7)$,
the problem turns to whether the following system of equations has
a single unique solution in terms of modulo $2^7$
for certain known integers $y_1, y_2, \cdots, y_7 \in [0, 255]$:
\begin{equation}
\left\{
\begin{IEEEeqnarraybox}[][c]{rCl}
\IEEEstrut
y_1 &=& (\alpha_1 s \dotplus t ) \oplus (\beta_1 s \dotplus t) , \\
y_2 &=& (\alpha_2 s \dotplus t ) \oplus (\beta_2 s \dotplus t) , \\
y_3 &=& (\alpha_3 s \dotplus t ) \oplus (\beta_3 s \dotplus t) , \\
y_4 &=& (\alpha_4 s \dotplus t ) \oplus (\beta_4 s \dotplus t) , \\
y_5 &=& (\alpha_5 s \dotplus t ) \oplus (\beta_5 s \dotplus t) , \\
y_6 &=& (\alpha_6 s \dotplus t ) \oplus (\beta_6 s \dotplus t) , \\
y_7 &=& (\alpha_7 s \dotplus t ) \oplus (\beta_7 s \dotplus t) .
\IEEEstrut
\end{IEEEeqnarraybox}
\right.
\label{eq:solvePro4}
\end{equation}
The intuitive method to verify this statement is to exhaustively search all the combinations of
all $2^{56}$ $7$-tuples $(y_1, y_2, \cdots, y_7)$ using the similar procedures as described in proposition~2. This involved complexity is equal to
searching the key space of DES algorithm, which is known as computational expensive.

Observing that the unique solution $(s, t)$ is determined by $64 \times 128$ out of $2^{56}$ $7$-tuples
$(y_1, y_2, \cdots, y_7)$, we can alternatively search $64 \times 128$ possible combination
of $(s, t)$ and check whether the resultant $7$-tuple $(y_1, y_2, \cdots, y_7)$ is unique.
The following procedure verifies this assumption. 
\begin{description}[noitemsep, nolistsep]
\item[Step 1:] Let $s=1$, $t=0$ and set $\mathbb{Y}_{1-7} = \emptyset$.

\item[Step 2:] Calculate $(y_1, y_2, \cdots, y_7)$ according to Eq.~(\ref{eq:solvePro4}) under known
$s, t$ and $7$ groups of $(\alpha, \beta)$.
If the $7$-tuple $(y_1, y_2, \cdots, y_7) \notin \mathbb{Y}_{1-7}$, then add
$(y_1, y_2, \cdots, y_7)$ to the set $\mathbb{Y}_{1-7}$. Otherwise, the proposition is false.
\item[Step 3:] Let $t= t+1$ if $t<128$, go to Step~2.
\item[Step 4:] Let $s= s+2$ if $s<128$ and set $t=0$, go to Step~2.
\end{description}
Finally, one can obtain a table composed of $(64 \times 128)$ $9$-tuples, i.e.,
$(y_1, y_2, \cdots, y_7, s, t)$. Finding the solution of Eq.~(\ref{eq:newmodAdd})
under seven queries of $(\alpha, \beta)$ simplifies to look-up-table, just as we did in proposition~2.
\end{proof}

\begin{Corollary}
The solution of the equation
\begin{IEEEeqnarray}{rCl}
y = (\alpha s \dotplus t) \oplus (\beta s \dotplus t)  \nonumber
\end{IEEEeqnarray}
in terms of modulo $2^7$ can be determined by the following $8$ groups of queries: $(2^0, 0)$, $(2^1, 0)$, $(2^2, 0)$, $(2^3, 0)$, $(2^4, 0)$
$(2^5, 0)$, $(2^6, 0)$ and $(2^7, 0)$.
\end{Corollary}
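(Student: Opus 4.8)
The plan is to reduce the corollary to Proposition~\ref{pro:solve2}: I will show that the eight queries $(2^i,0)$, $i=0,\dots,7$, carry exactly the information contained in the seven equations of the system~(\ref{eq:solvePro4}), after which the look-up-table constructed in the proof of Proposition~\ref{pro:solve2} finishes the job.

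First I would compute the output of a query whose second component is $0$. Since $0\cdot s = 0$ and $0\dotplus t = t$, querying $(\alpha,\beta)=(2^i,0)$ returns $y^{(i)} := (2^i s\dotplus t)\oplus t$, a byte that is known to the adversary under the CPA assumption. Next I would XOR consecutive outputs to cancel the shared term $t$:
\begin{equation*}
y^{(i)}\oplus y^{(i+1)} = \bigl[(2^i s\dotplus t)\oplus t\bigr]\oplus\bigl[(2^{i+1}s\dotplus t)\oplus t\bigr] = (2^i s\dotplus t)\oplus(2^{i+1}s\dotplus t).
\end{equation*}
The right-hand side is precisely $y_{i+1}$ in~(\ref{eq:solvePro4}), since there $(\alpha_j,\beta_j)=(2^{j-1},2^j)$. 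Hence putting $y_j := y^{(j-1)}\oplus y^{(j)}$ for $j=1,\dots,7$ reconstructs all seven known integers of Proposition~\ref{pro:solve2} from the eight queried bytes, and that proposition then pins down $(s,t)$ modulo $2^7$ via its precomputed $64\times 128$ look-up-table. This proves the corollary.

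I do not expect a serious obstacle, because all the combinatorial heavy lifting is already absorbed into Proposition~\ref{pro:solve2}; the only things to verify are the two cancellation identities above and that the map from the eight queried bytes to the seven differences loses no information. One small remark is worth making: since $\gcd(s,256)=1$ forces $s$ odd, we have $2^7 s\equiv 2^7\pmod{2^8}$, so $y^{(7)} = (128\dotplus t)\oplus t = 128$ is constant and $y_7 = y^{(6)}\oplus 128$; thus the query $(2^7,0)$ is in fact redundant and seven of the eight queries already suffice. If a self-contained argument is preferred over the reduction, one can instead rerun the exhaustive verification of Proposition~\ref{pro:solve2}, now tabulating the $64\times128$ ten-tuples $(y^{(0)},\dots,y^{(7)},s,t)$ and checking that the first eight coordinates uniquely determine the last two.
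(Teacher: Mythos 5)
Your proposal is correct and follows essentially the same route as the paper: the paper's own proof rests on the observation that the system~(\ref{eq:solvePro4}) is contained in (recoverable from) the equations generated by the eight queries $(2^i,0)$, after which everything reduces to a look-up-table, exactly as in your reduction to Proposition~\ref{pro:solve2}. You merely make the containment explicit via the XOR cancellation of $t$ (the paper instead tabulates a fresh $8192$-entry table of $10$-tuples $(y_1,\dots,y_8,s,t)$, which is your alternative closing remark), and your added observation that $(2^7,0)$ yields the constant answer $128$ and is therefore redundant is a correct refinement not noted in the paper.
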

\begin{proof}
It is easy to get the result with the observation that Eq.~(\ref{eq:solvePro4}) is included in the equations that are constructed from these $8$ queries.
Following the same procedures above, we construct a table of size
$(8192\times 1)$, each of whose entry is
an unique $10$-tuple $(y_1, y_2, \cdots, y_8, s, t)$. Once again, finding the solution becomes a look-up-table operation.
\end{proof}

\section{Chosen-plaintext attack of ICMPD}
As we can observe from Sec.~\ref{sec:key}, the key streams $E_r, E_c, S, T$ and $R$ are produced
independently from the encryption process. Moreover, the whole encryption is composed of a single round
(modified) permutation and diffusion. These facts can be employed to facilitate a \textit{divide-and-conquer} attack, where the whole system is cracked by employing that some
bottom-line chosen plain-images are neutral with respect to the permutation stage. For convenience, let
$u(j) = e_c(e_r(j))$ for $j \in [1, 8L]$ and denote $U=\{u(j)\}_{j=1}^{8L}$. We explain the detail of
how to recover the key streams $U, S, T$ and $R$ under a CPA scenario in the following.

\subsection{Revealing the permutation and equivalent substitution key streams ($U, S$ and $T$)}
Referring to step~1 of the encryption process (see Sec.~\ref{sec:encpro}), the intermediate
binary sequences can be obtained from the plain-image without any secret key, which allows us have the freedom to choose the binary sequences directly.

Let $B= \{b(j) \equiv 0\}_{j=1}^{8L}$ be a binary sequence with constant value $0$.
Referring to Eqs.~(\ref{eq:bitpermuation}), (\ref{eq:bitcomposition}) and (\ref{eq:newdiffusion}),
the resultant cipher-image $C=\{c(i)\}_{i=1}^{L}$ will satisfy
\begin{equation}
\left\{
\begin{IEEEeqnarraybox}[][c]{rCl}
\IEEEstrut
c(i) \oplus c(i-1) =&  [  p'(i)s(i) \dotplus t(i) ] \oplus r(i),  \\
p'(i) =& \sum_{k=1}^8 b( u(8(i-1)+k) ) \cdot 2^{k-1},
\IEEEstrut
\end{IEEEeqnarraybox}
\right.
\label{eq:chosenall0}
\end{equation}
where $i \in [1, L]$, $k \in [1, 8]$ and $b( u(8(i-1)+k) )= b(u(j)) \equiv 0$.
Now, it becomes clear that recovering $u(j)$, and then $U$, is equal to the problem of identifying
the relationship between $j$ and $(i,k)$ for all $j \in [1, 8L]$.

Slightly modify a single bit of the chosen plain binary sequence $B$, for example, set the lowest bit of the first pixel to $1$ and keep
the remaining $8L-1$ bits unchanged. Denote the modified version of $B$ as $\tilde{B}_1 = \{b_1(j)\}_{j=1}^{8L}$ and obtain its corresponding cipher-image $\tilde{C}_1=\{c_1(i)\}_{i=1}^L$.
Similar to Eq.~(\ref{eq:chosenall0}), we conclude
\begin{equation}
\left\{
\begin{IEEEeqnarraybox}[][c]{rCl}
\IEEEstrut
\tilde{c}_1(i) \oplus \tilde{c}_1(i-1) =&  [  p_1'(i)s(i) \dotplus t(i) ] \oplus r(i) ,  \\
p_1'(i) =& \sum_{k=1}^8 b_1( u(8(i-1)+k) ) \cdot 2^{k-1},
\IEEEstrut
\end{IEEEeqnarraybox}
\right.
\label{eq:chosena1}
\end{equation}
where  $b_1( u(8(i-1)+k) ) = b(u(j)) \equiv 0$ for $j>1$ and $b_1(1)=1$.
Combining Eqs.~(\ref{eq:chosenall0}) and (\ref{eq:chosena1}), it is concluded that
\begin{IEEEeqnarray}{rCl}
\lefteqn{ (c(i)\oplus c(i-1)) \oplus (\tilde{c}_1(i)\oplus \tilde{c}_1(i-1)) }  \nonumber \\
&=&   [  p'(i)s(i) \dotplus t(i) ] \oplus [  p_1'(i)s(i) \dotplus t(i) ]   \nonumber \\
&=& \left\{
            \begin{array}{rl}
             0 & \text{~~if~}i< i_1,   \\
             (0\cdot s(i) \dotplus t(i) ) \oplus (  2^{k_1}\cdot s(i) \dotplus t(i))  & \text{~~if~}i = i_1,~  \\
             0 & \text{~~if~}i> i_1,
            \end{array} \right.
            \label{eq:differencei0}
\end{IEEEeqnarray}
where $i_1 =\lfloor u^{-1}(1)/8 \rfloor +1$, $k_1 = u^{-1}(1) \bmod 8 $ and  $u(u^{-1}(i)) \equiv i$.

Given the secret key $(x_0, y_0, a, b, k', x'_0, k^\diamond, x^\diamond_0, \mu, x^*_0)$ =
\begin{small}
\small{$(0.346, 0.478, 1.644, 2.986, 4.434, 0.6435, 5.673, 0.523, 3.14, 0.34)$},
 \end{small}
 which is exactly the same as that used in \cite{zhu2014image}, we verify this statement by carrying out experiment to plain-image of size $128 \times 128$.
For illustration purpose, the cipher-images $C$ and $\tilde{C}_1$ are altered using
\begin{equation*}
\left\{
\begin{IEEEeqnarraybox}[][c]{rCl}
\IEEEstrut
v(i) &=& c(i) \oplus c(i-1)  \\
\tilde{v}_1(i) &=& c_1(i) \oplus c_1(i-1)
\IEEEstrut
\end{IEEEeqnarraybox}
\right.
\end{equation*}
and the results are denoted as $V$ and $\tilde{V}_1$. Figs.~\ref{fig:all0_c0} and \ref{fig:all0_c1} depict the cipher-image sequences corresponding to $V$ and $\tilde{V}_1$, respectively. The difference between $V$ and $\tilde{V}_1$ is shown in Fig.~\ref{fig:all0_xor}. Now, it is clear that the relationship between $j=1$ and $i=i_1$ can be readily identified.

Repeat this experiment for all the remaining bit locations, i.e., $j= 2\sim 8L$, of $B$, then one can obtain the mapping between $j \in [1, 8L]$ and $i \in [1, L]$ in the same way.

To reveal the exact permutation key stream $U$, the left problem is to identify the relationship between $j$ and $k$.
To study this problem, we set $i=i_1$ and review Eq.~(\ref{eq:differencei0})
\begin{IEEEeqnarray}{rCl}
y_1(i_1) &=& (c(i_1)\oplus c(i_1-1)) \oplus (\tilde{c}_1(i_1)\oplus \tilde{c}_1(i_1-1)) \nonumber \\
                &=& (0\cdot s(i_1) \dotplus t(i_1) ) \oplus (  2^{k_1}\cdot s(i_1) \dotplus t(i_1)). \nonumber
\end{IEEEeqnarray}
Noting that the relationship between $j\in [1, 8L]$ and $i \in [1, L]$ is revealed, we can obtain the following system of equations
\begin{equation*}
\left\{
\begin{IEEEeqnarraybox}[][c]{rCl}
\IEEEstrut
y_2(i_1) &=& (0\cdot s(i_1) \dotplus t(i_1) ) \oplus (  2^{k_2}\cdot s(i_1) \dotplus t(i_1)) , \\
y_3(i_1) &=& (0\cdot s(i_1) \dotplus t(i_1) ) \oplus (  2^{k_3}\cdot s(i_1) \dotplus t(i_1)) , \\
y_4(i_1) &=& (0\cdot s(i_1) \dotplus t(i_1) ) \oplus (  2^{k_4}\cdot s(i_1) \dotplus t(i_1)) , \\
y_5(i_1) &=& (0\cdot s(i_1) \dotplus t(i_1) ) \oplus (  2^{k_5}\cdot s(i_1) \dotplus t(i_1)) , \\
y_6(i_1) &=& (0\cdot s(i_1) \dotplus t(i_1) ) \oplus (  2^{k_6}\cdot s(i_1) \dotplus t(i_1)) , \\
y_7(i_1) &=& (0\cdot s(i_1) \dotplus t(i_1) ) \oplus (  2^{k_7}\cdot s(i_1) \dotplus t(i_1)) , \\
y_8(i_1) &=& (0\cdot s(i_1) \dotplus t(i_1) ) \oplus (  2^{k_8}\cdot s(i_1) \dotplus t(i_1)) ,
\IEEEstrut
\end{IEEEeqnarraybox}
\right.
\end{equation*}
by setting the other $7$ bits which will be permuted to the $i_1$-th pixel location, i.e., $i_1 =\lfloor u^{-1}(j)/8 \rfloor +1$, to $1$.

Referring to Corollary~1, $s(i_1)$ and $u(i_1)$, elements of the equivalent key streams of $S$ and $T$, can be determined by these equations.
Simultaneously, the mapping between $j$ and $k_m$ ($m \in [1, 8]$) can be also identified by checking the
bijection $y_n(i_1) \leftrightarrow k_m$ ($n \in [1,8]$).
Repeating this test for all the $L-1$ pixels, the relationship between $j\in [1, 8L]$ and $k \in [1, 8]$ can be totally revealed together
with the equivalent form of $S$ and $T$. What is more, we conclude that the data complexity involved is $(8L+1)$
in terms of number of chosen plain-images, which is linear to the size of the plain-image.

\subsection{Revealing the equivalent diffusion key stream $R$}
After recovering the permutation key stream $U$ and the equivalent substitution key streams $S$ and $T$, ICMPD becomes
a diffusion-only cipher that governed by Eq.~(\ref{eq:diffusion}).
Rewrite Eq.~(\ref{eq:chosenall0}) as
\begin{equation*}
r(i) =  t(i)  \oplus r(i) \oplus c(i) \oplus c(i-1),
\end{equation*}
then one can calculate the key stream $R$ using the chosen plain-image with fixed bit value $0$ and its corresponding cipher-image.
Finally, it is concluded that ICMPD can be broken at the cost of $8L+1$ chosen plain-images and their corresponding cipher-images.

To verify our analysis, we set the secret key to
\begin{small}
\small{$(0.346, 0.478, 1.644, 2.986, 4.434, 0.6435, 5.673, 0.523, 3.14, 0.34)$}
\end{small}
 and carry out some experiments to images of size $128 \times 128$.
Based on the assumption that the encryption machine can be temporarily accessed, we encrypt an image with all the pixels identical to zero.
Then, we consecutively modify the value of $128\times 128 \times 8$ bits of this zero image and obtain the corresponding $128\times 128 \times 8$ cipher-images.
The (equivalent) key streams $U$, $S$, $T$ and $R$ are deduced using the method described above. Then they are used to break the cipher-images shown in Fig.~\ref{fig1:cipher1}
and Fig.~\ref{fig1:cipher2}. The recovered result is depicted in Fig.~\ref{fig3:recover1} and Fig.~\ref{fig3:recover2}, which coincides with the original plain-images shown in Fig.~\ref{fig1:plain1}
and Fig.~\ref{fig1:plain2}.

\section{Discussion and conclusion}
In this paper, we have evaluated a new image cryptosystem based on modified permutation-diffusion architecture \cite{zhu2014image} in a chosen plaintext attack scenario.
As we claimed, the reason for the successful implementation of our CPA scheme is twofold:
a)  the iteration round of the permutation-diffusion round is merely one;
b) the key schedule is independent from the encryption process.
In concern to these problems, a simple remedy is to increase the iteration round \cite{fridrich1998symmetric, chen2004symmetric} based on
a comprehensively quantitative study on the tradeoff between complexity and security.
An alternative solution is to embed some feedback mechanism in the key schedule \cite{zhang2014chaotic}, such that the whole cryptosystem
will operate in a supposedly one-time-pad manner. Thus the difficulty of the CPA analysis increases dramatically.

The goal of this paper is not to simply present our CPA method on a given image cryptosystem, but build a new framework to quantitatively
study the security level of classical modulo then XORing operation and then apply this result to a new diffusion kernel.
In this regard, the work shown in this paper would benefit the measure of security of image cryptosystem based on permutation-diffusion architecture,
and thus the designing of practical schemes.

\bibliographystyle{elsarticle-num}
\bibliography{Ref}

\begin{thebibliography}{10}
\expandafter\ifx\csname url\endcsname\relax
  \def\url#1{\texttt{#1}}\fi
\expandafter\ifx\csname urlprefix\endcsname\relax\def\urlprefix{URL }\fi
\expandafter\ifx\csname href\endcsname\relax
  \def\href#1#2{#2} \def\path#1{#1}\fi

\bibitem{fridrich1998symmetric}
J.~Fridrich, Symmetric ciphers based on two-dimensional chaotic maps,
  International Journal of Bifurcation and Chaos 8~(06) (1998) 1259--1284.

\bibitem{chen2004symmetric}
G.~Chen, Y.~Mao, C.~K. Chui, A symmetric image encryption scheme based on {3D}
  chaotic cat maps, Chaos, Solitons \& Fractals 21~(3) (2004) 749--761.

\bibitem{wong2008fast}
K.-W. Wong, B.~S.-H. Kwok, W.-S. Law, A fast image encryption scheme based on
  chaotic standard map, Physics Letters A 372~(15) (2008) 2645--2652.

\bibitem{zhu2011chaos}
Z.-L. Zhu, W.~Zhang, K.-W. Wong, H.~Yu, A chaos-based symmetric image
  encryption scheme using a bit-level permutation, Information Sciences 181~(6)
  (2011) 1171--1186.

\bibitem{zhang2014chaotic}
L.~Y. Zhang, X.~Hu, Y.~Liu, K.-W. Wong, J.~Gan, A chaotic image encryption
  scheme owning temp-value feedback, Communications in Nonlinear Science and
  Numerical Simulation 19~(10) (2014) 3653--3659.

\bibitem{norouzi2014simple}
B.~Norouzi, S.~Mirzakuchaki, S.~M. Seyedzadeh, M.~R. Mosavi, A simple,
  sensitive and secure image encryption algorithm based on hyper-chaotic system
  with only one round diffusion process, Multimedia tools and applications
  71~(3) (2014) 1469--1497.

\bibitem{yang2015novel}
Y.-G. Yang, Q.-X. Pan, S.-J. Sun, P.~Xu, Novel image encryption based on
  quantum walks, Scientific Reports 5~(7784) (2015) 1--9.
\newblock \href {http://dx.doi.org/10.10.1038/nphys1170}
  {\path{doi:10.10.1038/nphys1170}}.

\bibitem{shannon1949communication}
C.~E. Shannon, Communication theory of secrecy systems, Bell system technical
  journal 28~(4) (1949) 656--715.

\bibitem{zhu2014image}
H.~Zhu, C.~Zhao, X.~Zhang, L.~Yang, An image encryption scheme using
  generalized arnold map and affine cipher, Optik-International Journal for
  Light and Electron Optics 125~(22) (2014) 6672--6677.

\bibitem{arroyo2013cryptanalysis}
D.~Arroyo, J.~Diaz, F.~B. Rodriguez, Cryptanalysis of a one round chaos-based
  substitution permutation network, Signal Processing 93~(5) (2013) 1358--1364.

\bibitem{li2012breaking}
C.~Li, L.~Y. Zhang, R.~Ou, K.-W. Wong, S.~Shu, Breaking a novel colour image
  encryption algorithm based on chaos, Nonlinear dynamics 70~(4) (2012)
  2383--2388.

\bibitem{wang2014cryptanalysis}
X.~Wang, D.~Luan, X.~Bao, Cryptanalysis of an image encryption algorithm using
  chebyshev generator, Digital Signal Processing 25 (2014) 244--247.

\bibitem{zhang15MTAP}
L.~Zeng, R.~Liu, L.~Y. Zhang, Y.~Liu, K.-W. Wong, Cryptanalyzing an image
  encryption algorithm based on scrambling and vegin¨¨re cipher, Multimedia
  Tools and Applications (2015) 1--15.

\bibitem{CH:HCKBA:IJBC10}
C.~Gangadhar, K.~D. Rao, Hyperchaos based image encryption, International
  Journal of Bifurcation and Chaos 19~(11) (2010) 3833--3839.

\bibitem{Rao:ModifiedCKBA:ICDSP07}
K.~Rao, C.~Gangadhar, Modified chaotic key-based algorithm for image encryption
  and its {VLSI} realization, in: Proceedings of the 2007 15th International
  Conference on Digital Signal Processing, 2007, pp. 439--442.

\bibitem{chen2014fast}
J.-X. Chen, Z.-L. Zhu, H.~Yu, A fast chaos-based symmetric image cryptosystem
  with an improved diffusion scheme, Optik-International Journal for Light and
  Electron Optics 125~(11) (2014) 2472--2478.

\bibitem{zhang2015color}
W.~Zhang, H.~Yu, Z.-L. Zhu, Color image encryption based on paired
  interpermuting planes, Optics Communications 338 (2015) 199--208.

\bibitem{li2011breaking}
C.~Li, M.~Z. Chen, K.-T. Lo, Breaking an image encryption algorithm based on
  chaos, International Journal of Bifurcation and Chaos 21~(07) (2011)
  2067--2076.

\bibitem{li2013breaking}
C.~Li, Y.~Liu, L.~Y. Zhang, M.~Z. Chen, Breaking a chaotic image encryption
  algorithm based on modulo addition and {XOR} operation, International Journal
  of Bifurcation and Chaos 23~(04) (2013) 1--12.

\end{thebibliography}

\newpage
\begin{figure}[!htb]
\centering
\begin{minipage}[t]{\imagewidth}
\centering
\includegraphics[width=\imagewidth]{}
\end{minipage}
\caption{Block diagram of the permutation-diffusion structure proposed by Fridrich.}
\label{fig:struture}
\end{figure}

\newpage
\begin{figure}[!htb]
\centering
\begin{minipage}[t]{\imagewidth}
\centering
\includegraphics[width=\imagewidth]{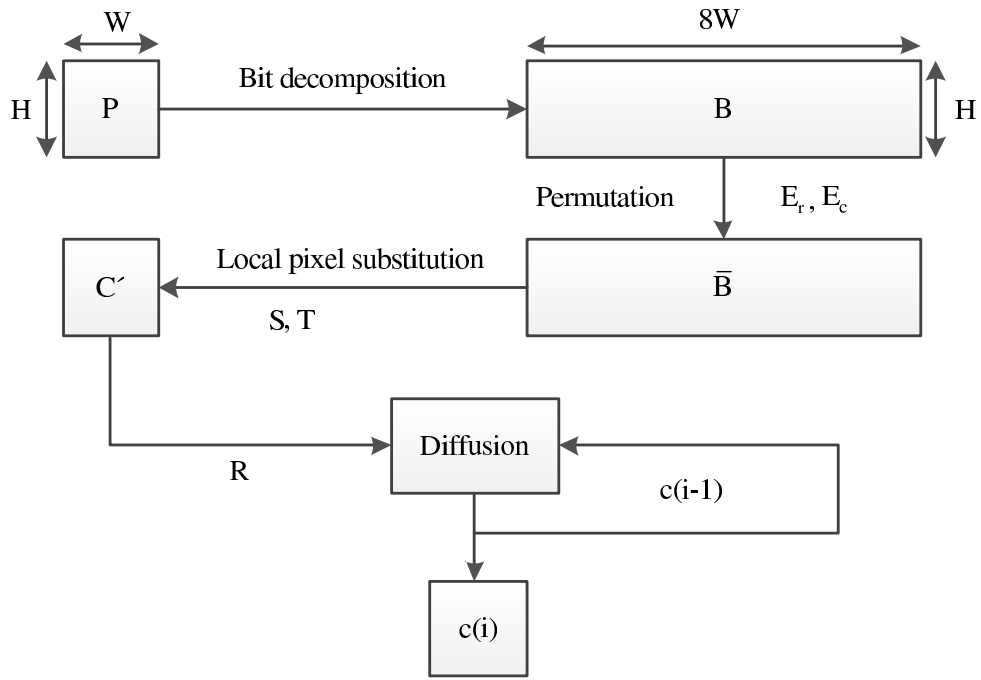}
\end{minipage}
\caption{Schematic diagram of the modified permutation-diffusion structure of ICMPD.}
\label{fig:newstruture}
\end{figure}

\newpage
\begin{figure*}[!htb]
\centering
\subfigure[]{
    \label{fig1:plain1}
    \begin{minipage}[t]{\imagewidtha}
    \centering
    \includegraphics[width=\imagewidtha]{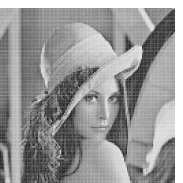}
    \end{minipage}}
\subfigure[]{
    \label{fig1:plain2}
    \begin{minipage}[t]{\imagewidtha}
    \centering
    \includegraphics[width=\imagewidtha]{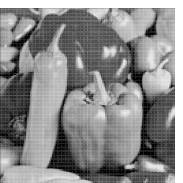}
    \end{minipage}}
\subfigure[]{
    \label{fig1:cipher1}
    \begin{minipage}[t]{\imagewidtha}
    \centering
    \includegraphics[width=\imagewidtha]{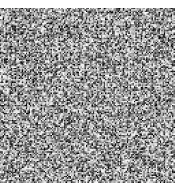}
    \end{minipage}}
\subfigure[]{
    \label{fig1:cipher2}
    \begin{minipage}[t]{\imagewidtha}
    \centering
    \includegraphics[width=\imagewidtha]{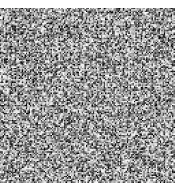}
    \end{minipage}}
\caption{Two plain-images and their corresponding cipher-iamges:
(a) plain-image ``Lena" of size $128\times 128$;
(b) plain-image ``Peppers" of size $128\times 128$;
(c) cipher-image corresponding to ``Lena";
(d) cipher-image corresponding to ``Peppers".}
\end{figure*}

\newpage
\begin{figure*}[!htb]
\centering
\subfigure[]{
    \label{fig:all0_c0}
    \begin{minipage}[t]{\imagewidtha}
    \centering
    \includegraphics[width=\imagewidtha]{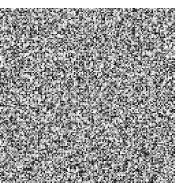}
    \end{minipage}}
\subfigure[]{
    \label{fig:all0_c1}
    \begin{minipage}[t]{\imagewidtha}
    \centering
    \includegraphics[width=\imagewidtha]{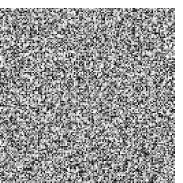}
    \end{minipage}}
\subfigure[]{
    \label{fig:all0_xor}
    \begin{minipage}[t]{\imagewidtha}
    \centering
    \includegraphics[width=\imagewidtha]{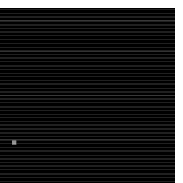}
    \end{minipage}}
\caption{Two cipher-image sequences and the difference between them:
(a) cipher-image sequence $V$ corresponding to $B$ ;
(b) cipher-image sequence $\tilde{V}_1$ corresponding to $\tilde{B}_1$;
(c) XOR between $V$ and $\tilde{V}_1$ (for perceptual purpose, we artificially set the value of the pixels around the non-zero one to $128$).}
\end{figure*}

\newpage
\begin{figure}[!htb]
\centering
\subfigure[]{
    \label{fig3:recover1}
    \begin{minipage}[t]{\imagewidtha}
    \centering
    \includegraphics[width=\imagewidtha]{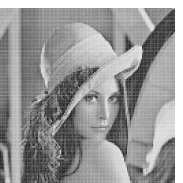}
    \end{minipage}}
\subfigure[]{
    \label{fig3:recover2}
    \begin{minipage}[t]{\imagewidtha}
    \centering
    \includegraphics[width=\imagewidtha]{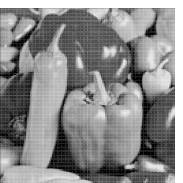}
    \end{minipage}}
\caption{Application example of our chosen plaintext attack:
(a) Recovered result from image shown in Fig.~\ref{fig1:cipher1} using the obtained equivalent key streams;
(b) Recovered result from image shown in Fig.~\ref{fig1:cipher2} using the obtained equivalent key streams.}
\end{figure}

\end{document}